\RequirePackage[l2tabu,orthodox]{nag}
\documentclass
[11pt,letterpaper]
{article} 

\usepackage[notes=true,later=false,camera=false]{dtrt}
\usepackage[utf8]{inputenc}
\usepackage{etex}
\usepackage{thmtools}
\usepackage{ stmaryrd }
\usepackage{xspace,enumerate}
\usepackage[T1]{fontenc}
\usepackage[full]{textcomp}
\usepackage[american]{babel}
\usepackage{mathtools}
\usepackage{amsthm}

\usepackage{empheq}

\hypersetup{
colorlinks=true,
urlcolor=Cerulean,
linkcolor=RoyalBlue,
citecolor=OliveGreen,
linktocpage=true,
}
\renewcommand*\backref[1]{\ifx#1\relax \else (pg. #1) \fi}

\usepackage[capitalise,nameinlink]{cleveref}
\crefname{lemma}{Lemma}{Lemmas}
\crefname{fact}{Fact}{Facts}

\crefname{theorem}{Theorem}{Theorems}
\crefname{corollary}{Corollary}{Corollaries}
\crefname{claim}{Claim}{Claims}
\crefname{example}{Example}{Examples}
\crefname{algorithm}{Algorithm}{Algorithms}
\crefname{problem}{Problem}{Problems}
\crefname{definition}{Definition}{Definitions}
\crefname{equation}{Eq.}{Eq.}
\crefname{strategy}{Strategy}{Strategies}

\usepackage{paralist}
\usepackage{turnstile}
\usepackage[framemethod=TikZ]{mdframed}
\mdfsetup{frametitlealignment=\center}
\usepackage{tikz}
\usepackage{caption}
\DeclareCaptionType{Algorithm}
\usepackage{newfloat}
\usepackage{aliascnt}
\newtheorem{theorem}{Theorem}[section]
\newaliascnt{lemma}{theorem}
\newtheorem{lemma}[lemma]{Lemma}
\aliascntresetthe{lemma}\newtheorem*{lemma*}{Lemma}
\newaliascnt{remark}{theorem}
\newtheorem{remark}[remark]{Remark}
\aliascntresetthe{remark}\newtheorem*{remark*}{Remark}
\newaliascnt{definition}{theorem}
\newtheorem{definition}[definition]{Definition}
\aliascntresetthe{definition}\newtheorem*{definition*}{Definition}

\newaliascnt{proposition}{theorem}

\aliascntresetthe{proposition}
\newaliascnt{fact}{theorem}
\newtheorem{fact}[fact]{Fact}
\aliascntresetthe{fact}
\newaliascnt{corollary}{theorem}
\newtheorem{corollary}[corollary]{Corollary}
\aliascntresetthe{corollary}

\newtheorem{algorithm-thm}[theorem]{Algorithm}

\usepackage[
letterpaper,
top=1.2in,
bottom=1.2in,
left=1in,
right=1in]{geometry}
\usepackage{newpxtext} 
\usepackage{textcomp} 
\usepackage{mathpazo}
\usepackage[scr=rsfso]{mathalfa}
\usepackage{bm} 
\let\mathbb\varmathbb
\usepackage{microtype}

\usepackage{footnotebackref}

\allowdisplaybreaks
\newcommand{\FormatAuthor}[3]{
\begin{tabular}{c}
#1 \\ {\small\texttt{#2}} \\ {\small #3}
\end{tabular}
}

\newcommand{\eps}{\varepsilon}

\newcommand{\E}{{\mathbb E}}

\newcommand{\ip}[1]{\langle #1 \rangle}

\newcommand{\cH}{\mathcal H}

\newcommand{\mper}{\,.}
\newcommand{\mcom}{\,,}

\newcommand{\cC}{\mathcal C}

\renewcommand{\geq}{\geqslant}
\renewcommand{\leq}{\leqslant}
\renewcommand{\epsilon}{\varepsilon}

\usepackage{macros}
\title{Lower Bounds for all List-Decodable Deletion Codes}
\author{
 \begin{tabular}{cc}
\FormatAuthor{Andrew D. Lin}{andrewlin@princeton.edu}{Princeton University}
  \end{tabular}
  }
  \date{\today}
\begin{document}

\newcommand{\fhj}{F_\cH(j)}
\newcommand{\fgj}{F_{\gnkt}(j)}
\newcommand{\lcs}{\text{LCS}}
\newcommand{\scs}{\text{SCS}}
\newcommand{\mlcs}[1]{m_{\lcs(u,v),#1}}
\newcommand{\mscs}[1]{m_{\scs(u,v),#1}}

\maketitle
\begin{abstract}
    A length-$n$ binary $k$-deletion code is a set of binary strings such that if we delete any $k$ bits of a string, leaving a length-$(n-k)$ binary string, we can uniquely recover the codeword. In this paper, we consider $t$-list decodable deletion codes, where after $k$ bits of a codeword are deleted, we can identify a list of size at most $t$ such that the original codeword lies in the list. We prove a lower bound of $\Omega_k(2^n t\log^{1/t}n/n^{k+k/t})$ on the optimal size of a $t$-list decodable $k$-deletion code, giving a $\sqrt{\log n}$ improvement over the previously best known bounds for $2$-list decodable $2$-deletion codes \cite{GuruswamiH21} and providing the first nontrivial lower bound when $t>2$ or $k>2$. Our bound holds for all $t\leq n^k$, showing that $t=\Omega(\log n)-$list decodable deletion codes have optimal size $\Theta_k(2^n t/n^k)$, asymptotically matching the known upper bound. We also prove upper bounds on the number of common subsequences and common supersequences of a given length for any two binary strings.
\end{abstract}
\section{Introduction}
The $k$-deletion channel for binary strings is a noise model which takes in a length-$n$ binary string $x\in \{0,1\}^n$ and outputs an arbitrary length-$(n-k)$ subsequence $y\in \{0,1\}^{n-k}$ of the input. The receiver has no information other than the subsequence; in particular, the positions of the $k$ deleted bits are unknown. A $k$-deletion code is an error correcting code for the $k$-deletion channel, where we can decode the original codeword given a length-$(n-k)$ subsequence. More precisely, it is a subset $C\subseteq \{0,1\}^n$ such that no two elements of $C$ share a common length-$(n-k)$ subsequence. A $t$-list decodable code is a relaxation of an error correcting code, where instead of recovering the codeword from a corrupted input, we allow ourselves to find a list of at most $t$ codewords that the input can be. Determining the optimal size of various types of codes is a key question in coding theory.

Letting $D(n,k)$ denote the maximum size of a length-$n$ $k$-deletion code, Levenshtein \cite{Levenshtein65} showed the upper and lower bounds of
\begin{align*}
    \Omega_k\left(\frac{2^n}{n^{2k}}\right)\leq D(n,k)\leq O_k\left(\frac{2^n}{n^k}\right)\mper
\end{align*}
Heuristically, the upper bound follows from the fact that most length-$n$ binary strings have $\Omega_k(n^k)$ length-$(n-k)$ subsequences, none of which can be a subsequence of another codeword. The lower bound follows from greedily picking codewords, using the fact that each of the $O_k(n^k)$ length-$(n-k)$ subsequences of a chosen string each have $O_k(n^k)$ length-$n$ supersequences which cannot be chosen. For $k=1$, the Varshamov and Tenengolts \cite{VarshamovTenengolts65} construction was shown by \cite{Levenshtein65} to be an explicit $1$-deletion code of size $\Theta(2^n/n)$, matching the upper bound. Precisely, \cite{VarshamovTenengolts65} defines for all $0\leq a\leq n$ the set $\cC_a\coloneqq \{x\in \{0,1\}^n:\sum_{i}ix_i=a\mod n+1\}$; the largest $\cC_a$ is a $1$-deletion code of size at least $2^n/(n+1)$ \footnote{In fact, \cite{Ginzburg1967} shows that $|\cC_a|=\Theta(2^n/n)$ for all $a$.}. For $k\geq 2$, a recent work \cite{ABGHK} improved the lower bound by a logarithmic factor, showing that $D(n,k)\gtrsim 2^n\log n/n^{2k}$; however, closing the existential $\tilde{O}(n^k)$ gap between the upper and lower bounds for $k\geq 2$ remains open, with little progress made over decades.

We thus consider list-decodable deletion codes. A work of Guruswami and Hastad \cite{GuruswamiH21} which explicitly constructed $2$-deletion codes of size $2^n/n^{4+o(1)}$, nearly matching the existential bound up to a $n^{o(1)}$ factor, also showed the existence of $2$-list decodable $2$-deletion codes of size $\Omega(2^n/n^3)$, giving the first improvement on the size of $t$-list decodable codes above the $\Omega(2^n\log n/n^{2k})$ lower bound when $t=1$. However, no improved lower bound is known for any $k\geq 3$ and for $k=2$, the $\Omega(2^n/n^3)$ lower bound remains the best known for any $t$. We note that the upper bound of \cite{Levenshtein65} translates to an $O_k(2^n t/n^k)$ upper bound for $t$-list decodable codes. 

\parhead{Our results}In this work, we prove the first nontrivial lower bound for all $t,k$, which can be viewed as a generalization of \cite{ABGHK} to binary list-decodable deletion codes. The approach of \cite{ABGHK} considers the $k$-deletion graph on vertex set $\{0,1\}^n$, where two vertices are connected if they share a common length-$(n-k)$ subsequence; a $k$-deletion code is simply an independent set in this graph. They then show the existence of a size $\Omega_k(2^n\log n/n^{2k})$ independent set by first proving the graph contains fewer than $2^n d_{\text{avg}}^{2-\eps}$ triangles, where $d_{\text{avg}}$ is its average degree, and using a classic result \cite{AKS80,AKS81} to obtain a $\log d_{\text{avg}}$ improvement compared to the lower bound given by Turan's theorem. 

For list-decodable deletion codes, we generalize the notion of a $k$-deletion graph, where we have a hyperedge for any $(t+1)$ vertices which all share a common length-$(n-k)$ subsequence. We call this the $(k,t)$-deletion hypergraph. Similar lower bounds on the independence number of uniform hypergraphs exist: if a $(t+1)$-uniform hypergraph $\cH$ on $N$ vertices with maximum degree $\Delta$ is \emph{uncrowded}, meaning it has no cycles\footnote{A cycle in a hypergraph is defined as a closed path of distinct $v_1, \cdots, v_\ell,v_{\ell+1}=v_1$ and distinct hyperedges $C_1, \cdots, C_\ell$ such that $v_i,v_{i+1}\in C_i$ for all $1\leq i\leq\ell$.} of length $\leq 4$, then $\alpha(\cH)\gtrsim N(\log \Delta/\Delta)^{1/t}$ \cite{KPS82, AKPSS}. A follow-up result of \cite{DLR95} showed the aforementioned bound for all hypergraphs such that the number of pairs of hyperedges intersecting in size $j$ is not too large for any $j\geq 2$. We strengthen this result and use it to prove the bound on the independence number of the $(k,t)$-deletion hypergraph, which implies an existential lower bound on the size of a $t$-list decodable $k$-deletion code, which we formally state below.
\begin{theorem}\label{thm:main}
    For all $t\leq n^k$, there exists a binary $t$-list decodable $k$-deletion code $\cC\subseteq\{0,1\}^n$ with size $|\cC|\geq \Omega_k\left(\frac{2^n t\log^{1/t} n}{n^{k(1+1/t)}}\right)$.
    
\end{theorem}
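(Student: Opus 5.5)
The approach is to apply a Duke--Lefmann--R\"odl type independence bound to the $(k,t)$-deletion hypergraph $\cH$. Its vertex set is $\{0,1\}^n$, and a $(t+1)$-set is a hyperedge when all its members share a common length-$(n-k)$ subsequence. A set is independent in $\cH$ exactly when it is a $t$-list decodable $k$-deletion code. The plan has three stages: pass to a well-behaved subset, bound degrees and codegree-type counts, and invoke the independence bound.

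\emph{Step 1 (restriction and degree bounds).} First I would restrict to strings whose number of runs is $n/2\pm O(\sqrt{n\log n})$. This loses only a constant fraction of vertices. Each such $x$ has $\Theta_k(n^k)$ subsequences of length $n-k$, and each subsequence has $O_k(n^k)$ supersequences. So every hyperedge through $x$ is a $t$-subset of the $O_k(n^{2k})$-size set $\{x' : x' \text{ shares an }(n-k)\text{-subsequence with } x\}$ that also lies in a common ``ball'' of size $O_k(n^k)$. This gives
\[
\Delta \lesssim_k n^k\binom{O_k(n^k)}{t}.
\]
The natural comparison is the Tur\'an/greedy bound $\alpha(\cH)\gtrsim N/\Delta^{1/t}$, which gives roughly $2^n t/n^{k(1+1/t)}$ after Stirling, using $\binom{m}{t}^{1/t}\approx m/t$. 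The extra $\log^{1/t}\Delta\approx \log^{1/t} n$ must come from the uncrowdedness-type bound.

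\emph{Step 2 (the strengthened \cite{DLR95} condition).} The paper states that it strengthens the \cite{DLR95} result. The expected condition is that, for each $2\le j\le t$, the number of pairs of hyperedges meeting in exactly $j$ vertices is at most $N\Delta^{(2t+1-j)/t-\eps}$, or a comparable requirement on the $j$-codegree $\Delta_j$, namely $\Delta_j\le \Delta^{(t+1-j)/t-\eps}$. To verify this, I would fix $j$ vertices in a common hyperedge and count the extensions to a full hyperedge. The cleanest route is to use the paper's promised upper bounds on common subsequences of two distinct strings. Two distinct strings share at most $O_k(n^{k-1})$ common subsequences of length $n-k$, compared with $\Theta(n^k)$ for a single string. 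Hence any $j\ge2$ fixed vertices lie in $O_k(n^{k-1})$ common balls. This gives
\[
\Delta_j\lesssim n^{k-1}\binom{O(n^k)}{t+1-j}\approx \Delta^{(t+1-j)/t}\cdot n^{-1}\cdot(\text{$t$-dependent factors}).
\]
The saving of a factor $n$ is a polynomial saving $\Delta^{-\eps}$ with $\eps\approx 1/(2k)$ relative to $\Delta\approx n^{2k}$ in the $t$-normalized sense. I would carry out this computation uniformly in $t$, keeping track of the binomial ratios.

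\emph{Step 3 (apply and handle large $t$).} Plugging into the strengthened independence theorem gives $\alpha\gtrsim N(\log\Delta/\Delta)^{1/t}$. Since $\log\Delta\gtrsim \log n$ and $\Delta^{1/t}\lesssim n^{k/t}\cdot n^k/t$, this yields the claimed $2^n t\log^{1/t}n/n^{k(1+1/t)}$.

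\emph{Main obstacle.} The difficulty is uniformity in $t$ up to $n^k$. When $t$ is large, the binomial ratios and the $t$-dependent constants in the hypergraph theorem must not swamp the gain. In the regime $t\gtrsim\log n$ the $\log^{1/t}$ factor is $\Theta(1)$, so there I would instead use a direct random-subset-plus-alteration argument. Concretely, I would keep each string with probability $p\approx t/n^k$ and then delete one vertex from each ball containing more than $t$ survivors. Chernoff bounds make these deletions negligible and give $\Theta_k(2^n t/n^k)$. Thus the delicate bookkeeping of Step 2 is needed only for $t=O(\log n)$. The key technical input remains the codegree bound via the two-string common-subsequence estimate.
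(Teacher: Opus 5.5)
\textbf{There is a genuine gap in Step 2.} Checking the hypothesis of the independence bound through maximum $j$-codegrees cannot work.

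Your claimed saving of a factor $n$ comes from an exponent error. From $\Delta\approx n^k\binom{O_k(n^k)}{t}$ you get $\Delta^{1/t}\approx n^{k+k/t}/t$. So $\Delta^{(t+1-j)/t}$ keeps only $n^{k(t+1-j)/t}$ of the factor $n^k$ paid for choosing the common subsequence $y$, not the whole $n^k$. Comparing with your estimate $\Delta_j\lesssim n^{k-1}\binom{O_k(n^k)}{t+1-j}$, the ratio $\Delta_j/\Delta^{(t+1-j)/t}$ is of order $n^{k(j-1)/t-1}$, up to $t$-dependent factors. For $j=t$ this is $n^{k-1-k/t}\ge 1$ for all $k,t\ge 2$. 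Even for $j=2$ you gain only when $t>k$.

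A sharper codegree estimate cannot help, because the maximum codegree really is this large. Take $z\in\{0,1\}^{n-1}$ with $\Theta(n)$ runs, and let $x^{(1)},\dots,x^{(t)}$ be $t$ distinct one-bit insertions of $z$. They share all $\Theta_k(n^{k-1})$ length-$(n-k)$ subsequences of $z$. Every $x$ with $\lcs(x,z)\ge n-k$ completes them to a hyperedge, and there are $\Omega_k(n^{2k-1})$ such $x$. So this $t$-set has codegree $\Omega_k(n^{2k-1})$, which is not $O(\Delta^{1/t-\eps})$. For the same reason, ``fix $j$ vertices and count extensions'' amounts to bounding $\fgj\le |E(\gnkt)|\binom{t+1}{j}\Delta_j$, and this overshoots the required $NT^{2t+1-j-\eps}$.

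\textbf{The missing idea.} You need the \emph{averaged} hypothesis of \Cref{fact:indnum} on $\fgj$, together with a global count showing that clustered $j$-sets are rare. The paper counts pairs $(C,C')$ through witnesses $y,y'$, stratified by $\scs(y,y')=n-k+m$ with $0\le m\le k$:
\begin{itemize}
\item Given $y$, there are $O_k(n^{2m})$ choices of $y'$.
\item By \Cref{lem:upperbound}, $y,y'$ have only $O_k(n^{k-m})$ common length-$n$ supersequences, so the $j$ shared vertices cost $O_k(n^{j(k-m)})$.
\item For $j\ge 2$, the sum $\sum_m n^{2m+j(k-m)}=O_k(n^{jk})$ is dominated by $y=y'$.
\end{itemize}
This gives $\fgj\le 2^n\,O_k(n^{jk+2k(t+1-j)})$, up to factorial factors in $t$. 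That is at most $NT^{2t+1-j-1/(t+1)}$ with $T\asymp_k n^{k+k/t}/t$, precisely because $j\le t$. So the right input is the common-supersequence bound as a function of $\scs(y,y')$, traded against the number of such $y'$. The worst-case two-string subsequence bound you cite is not enough.

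\textbf{A secondary gap in the range of $t$.} \Cref{fact:indnum} only covers $t\lesssim(\log n/\log\log n)^{1/3}$. Your alteration with $p\approx t/n^k$ needs $t\gtrsim_k\log n$: there are $2^{n-k}$ balls, each overflowing with probability $e^{-\Theta(t)}$, and the total overflow must stay below $2^nt/n^k$. This leaves the range in between uncovered. You can close it by noting that $\log^{1/t}n=O(1)$ already for $t\gtrsim\log\log n$, and by running the alteration with $p=t/(en^{k+k/t})$, which works for every $t\le n^k$. The graph case $t=1$ also needs separate treatment, since \Cref{fact:indnum} requires uniformity at least $3$.
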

As a direct corollary, we obtain tight bounds on the optimal size of list-decodable deletion codes with list size $\Omega(k\log n)$.
\begin{corollary}
    The optimal size of a $t$-list decodable $k$-deletion code is $\Theta_k(2^nt/n^k)$ for all $\Omega(k\log n)\leq t\leq n^k$.
\end{corollary}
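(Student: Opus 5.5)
The corollary is immediate from \cref{thm:main} together with the upper bound of \cite{Levenshtein65}, so the plan is to check the two directions separately and then match the parameters.

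For the upper bound I will use the list-decoding version of Levenshtein's counting argument. Most strings $x\in\{0,1\}^n$ (all but a vanishing fraction, or all but those with few runs) have at least $c_k n^k$ distinct length-$(n-k)$ subsequences. In a $t$-list decodable $k$-deletion code, each string $y\in\{0,1\}^{n-k}$ is a subsequence of at most $t$ codewords. Double counting the pairs (codeword $x$, subsequence $y$), restricted to codewords with many runs, gives
\[
|\cC|\cdot c_k n^k \le t\cdot 2^{n-k}+(\text{number of strings with few runs})\cdot\text{poly}\mper
\]
Strings with few runs, say fewer than $n/4$ runs, form an exponentially small fraction of $\{0,1\}^n$, so we can simply discard them from the count. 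This yields $|\cC|\le O_k(2^n t/n^k)$, the bound already noted in the introduction.

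For the lower bound I apply \cref{thm:main}, which holds for every $t\le n^k$ and gives
\[
|\cC|\ge \Omega_k\!\left(\frac{2^n t}{n^k}\cdot \frac{\log^{1/t} n}{n^{k/t}}\right)\mper
\]
The factor $\log^{1/t}n\ge 1$ can be dropped, since $n\ge 3$ gives $\log n\ge 1$. It then remains to check that $n^{k/t}=\exp(k\ln n/t)$ is $O(1)$. This holds exactly when $t\ge c\,k\ln n$ for some constant $c>0$, which is the hypothesis $t=\Omega(k\log n)$. In that case $n^{k/t}\le e^{1/c}$, so the lower bound is $\Omega_k(2^nt/n^k)$. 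The upper limit $t\le n^k$ is inherited from \cref{thm:main}.

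Combining the two directions gives $\Theta_k(2^nt/n^k)$. There is no genuine obstacle here. The only points needing care are that the implicit constants depend only on $k$ (and on the constant hidden in $\Omega(k\log n)$), and that the upper-bound count correctly handles low-run strings.
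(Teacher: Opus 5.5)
Your proposal is correct and follows the same route as the paper. The lower bound comes from \cref{thm:main}, since $n^{k/t}=e^{k\ln n/t}=O(1)$ once $t\geq ck\ln n$. The upper bound is Levenshtein's counting bound $O_k(2^nt/n^k)$, which the paper only cites in the introduction and which you correctly sketch by discarding the exponentially few strings with $o(n)$ runs.
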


Our proof also requires upper bounds on the number of common supersequences of a fixed length of two strings. We first define some notation.
\begin{definition}
    Let $u$ and $v$ be strings. We call a common supersequence $w$ of $u$ and $v$ a \emph{minimal common supersequence} of $u$ and $v$ if no proper subsequence of $w$ is a common supersequence of $u$ and $v$. Let $\lcs(u,v)$ denote the length of the longest common subsequence of $u$ and $v$ and $\scs(u,v)$ denote the length of the shortest common supersequence of $u$ and $v$. Let $m_{\lcs(u,v)}$ and $m_{\scs(u,v)}$ denote the number of common subsequences or supersequences of length $\lcs(u,v)$ or $\scs(u,v)$, respectively. For all $r\geq 0$, let $\mlcs{r},\mscs{r}$ denote the number of common subsequences or supersequences of length $\lcs(u,v)-r$ or $\scs(u,v)+r$, respectively. 
\end{definition}
The following theorem generalizes \cite[Theorem~3]{ABGHK}, which shows the $r=0$ case of our result.
\begin{theorem}\label{thm:genmult}
    Let $n,a,b$ be natural numbers such that $n\geq a+b.$ If $u$ and $v$ are words of length $n-a$ and $n-b$, respectively, and $\lcs(u,v)=n-a-b$, then for all $r\geq 0$, $\mlcs{r}\leq\mscs{r}$. Furthermore, if $u$ and $v$ are binary strings, then
    \begin{align*}
         \mlcs{r}\leq\mscs{r}\leq \sum_{s=0}^r {a+b+2s\choose a+s}\sum_{i=0}^{r-s}{n+r\choose i}\mper
    \end{align*}
\end{theorem}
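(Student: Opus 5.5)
The proof splits into two parts. The first is the general inequality $\mlcs{r}\leq\mscs{r}$, proved by an injection. The second is the binary counting bound, proved by decomposing each common supersequence into a minimal one plus insertions.

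\textbf{Setup and the injection.} By the standard identity $\lcs(u,v)+\scs(u,v)=|u|+|v|=2n-a-b$, the hypothesis $\lcs(u,v)=n-a-b$ gives $\scs(u,v)=n$. So $\mlcs{r}$ counts common subsequences of length $n-a-b-r$, and $\mscs{r}$ counts common supersequences of length $n+r$. I would build the injection following the $r=0$ argument of \cite[Theorem~3]{ABGHK}.
\begin{enumerate}[(i)]
\item Given a common subsequence $z$ of length $n-a-b-r$, fix its canonical (leftmost, greedy) embeddings into $u$ and into $v$.
\item Between consecutive matched symbols, insert the unmatched block of $u$ followed by the unmatched block of $v$. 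Choose the order convention so that no spurious matches are created; for instance, order the blocks by whichever choice keeps the leftmost embedding leftmost.
\item The resulting word $w(z)$ is a common supersequence of length $|u|+|v|-|z|=n+r$.
\end{enumerate}
To show the map is injective, I would recover $z$ from $w(z)$. Run the greedy embeddings of $u$ and $v$ into $w$; the positions used by both should be exactly the positions of $z$. Leftmost canonicity is what makes this work, because the greedy embedding of $u$ into $w$ reproduces the insertion pattern.

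\textbf{Binary counting: reduction to minimal supersequences.} Every common supersequence $w$ of length $n+r$ contains, as a subsequence, a minimal common supersequence $w'$, of length $n+s$ for some $0\leq s\leq r$. Then $w$ is obtained from $w'$ by inserting $r-s$ symbols. I would use the classical fact that a binary string of length $m$ has exactly $\sum_{i=0}^{j}\binom{m+j}{i}$ supersequences of length $m+j$. With $m=n+s$ and $j=r-s$ this gives the inner sum $\sum_{i=0}^{r-s}\binom{n+r}{i}$. It then remains to show that there are at most $\binom{a+b+2s}{a+s}$ minimal common supersequences of length $n+s$.

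\textbf{Counting minimal common supersequences.} In a minimal $w'$ of length $n+s$, every position is used by the embedding of $u$ or of $v$, or both. Hence exactly $n-a-b-2s$ positions are shared, $a+s$ positions are used only by $u$, and $b+s$ only by $v$. I would encode $w'$ by the word in $\{U,V\}^{a+b+2s}$ listing, in order, the owners of the unshared positions, and show that $u$, $v$ and this word determine $w'$. The decoding is a greedy reconstruction: scan $u$ and $v$ simultaneously; emit a shared symbol whenever the current heads agree and the minimality and binary constraints force a match; otherwise consult the next letter of the code word to decide which string advances. This gives the bound $\binom{a+b+2s}{a+s}$.

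\textbf{Main obstacle.} The hard step is proving that this decoding is well defined. When the two heads agree, one must decide whether that symbol is shared or is an unshared symbol followed by a repeat. I expect to settle this by fixing canonical embeddings, leftmost for $u$ and leftmost relative to the rest for $v$, and using the binary alphabet: in a run of equal symbols, sharing versus not sharing changes only the run lengths, so with a canonical convention the choice is forced. The same canonicalization issue is the delicate point in proving injectivity in the first part.
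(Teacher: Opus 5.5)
\textbf{The gap is in your injectivity argument for the first part.} Your construction of $w(z)$ is the paper's map $\phi$. The recovery rule you propose is that the positions used by both greedy embeddings of $u$ and $v$ into $w(z)$ are exactly the positions of $z$. This rule is false as soon as $r>0$.

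Take $u=v=10$, so $n=2$, $a=b=0$, and take $r=1$.
\begin{itemize}
\item For $z=0$ the leftmost embeddings leave gaps $U_1=V_1=1$, giving $w(0)=110$. In $110$ both greedy embeddings use positions $\{1,3\}$.
\item For $z=1$ one gets $w(1)=100$. In $100$ both greedy embeddings use $\{1,2\}$.
\end{itemize}
So your decoder outputs $10$ in both cases. That is not $z$, nor even of length $|z|$. Since $U_1=V_1$ here, no ordering convention can help.

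Half of your intuition is right. The greedy embedding of $u$ does reproduce the intended $u$-positions, because each $V_k$ contains no copy of $z_k$. The greedy embedding of $v$, however, can run ahead into the $U$-blocks. Counting gives only $|G_u\cap G_v|\geq |u|+|v|-|w(z)|=|z|$. Equality is forced only when $|w(z)|=\mathrm{SCS}(u,v)$, i.e.\ $r=0$, which is exactly the case you are generalizing.

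The map is still injective, but it needs a different argument. The paper compares $\phi(z)$ and $\phi(z')$ at the first index $k$ with $z_k\neq z'_k$.
\begin{itemize}
\item The two words agree up to the images of $z_{k-1}$, since the leftmost embeddings of the common prefix coincide.
\item In $\phi(z')$, the stretch from there up to the copy of $z'_k$ is $U'_kV'_k$, which contains no $z'_k$ by leftmostness.
\item If $i_k+j_k=i'_k+j'_k$, the words differ at position $i_k+j_k-k$.
\item Otherwise, assuming w.l.o.g.\ $i_k+j_k<i'_k+j'_k$, one finds a copy of $z'_k$ in $\phi(z)$ strictly inside that stretch. It is the image of $u_{i'_k}$ or of $v_{j'_k}$, depending on how $i'_k,j'_k$ interleave with the $i_\ell,j_\ell$.
\end{itemize}

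\textbf{The binary bound follows the paper's route and is fine, up to two points.} You decompose into a minimal common supersequence of length $n+s$ plus $\sum_{i=0}^{r-s}\binom{n+r}{i}$ insertions, and bound the minimal ones by $\binom{a+b+2s}{a+s}$.

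First, your position counts are off. Every position of a minimal $w'$ is covered, so there are $(n-a)+(n-b)-(n+s)=n-a-b-s$ shared positions, $b+s$ used only by $u$, and $a+s$ used only by $v$. Your numbers sum to $n$, not $n+s$. The code length $a+b+2s$ and the binomial are unaffected.

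Second, the obstacle you flag dissolves without any binary structure: take both embeddings leftmost.
\begin{itemize}
\item During the scan, the next symbol of $w'$ is used by the leftmost embedding of $u$ exactly when it equals the current head of $u$, and likewise for $v$.
\item Minimality forces at least one of these, so if the heads agree the symbol is shared.
\item A code letter is consumed only when the heads differ.
\end{itemize}
This is the paper's induction on $a+b+2r$ unrolled: strip the common prefix, split on whether $w'$ starts with $u_1$ or $v_1$, note that the tails stay minimal, and apply Pascal's identity.
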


\section{Preliminaries}
We formally define a $t$-list decodable $k$-deletion code and the $(k,t)$-deletion hypergraph.
\begin{definition}
    We call $\cC\subseteq \{0,1\}^n$ a \emph{$t$-list decodable $k$-deletion code} if for all $y\in \{0,1\}^{n-k}$, there exist at most $t$ distinct $x\in\cC$ containing $y$ as a subsequence.
\end{definition}
\newcommand{\gnkt}{\Gamma_{n,k,t}}
\begin{definition}
    We define the $(k,t)$-deletion hypergraph $\gnkt$ to be the $(t+1)$-uniform hypergraph with vertex set $V(\gnkt)\coloneqq \{0,1\}^n$ and edge set $E(\gnkt)$ such that for all $C\subseteq \{0,1\}^n$ with $|C|=t+1$, we have $C\in E(\gnkt)$ if there exists some $y\in \{0,1\}^{n-k}$ which is a subsequence of all $x\in C$.
\end{definition}
We use the following result which lower bounds the independence number of any uniform hypergraph such that the number of pairs of hyperedges intersecting in exactly $j$ coordinates is not too large for all $j\geq 2$.
\begin{definition}
    Let $\cH$ be a $k$-uniform hypergraph. Define $\fhj$ to be the number of unordered pairs $C,C'\in \cH$ such that $|C\cap C'|=j$.
\end{definition}
\begin{fact}(\cite[Theorem~3]{DLR95})\label{fact:indnum}
    Let $3\leq k\leq O((\log t/\log\log t)^{1/3})$ and $\cH$ be a $k$-uniform hypergraph with $n$ vertices with maximum degree $\leq t^{k-1}$. If $\fhj\leq nt^{2k-j-1-\eps}$ for some $\eps\geq \exp(-O(k))$ for all $2\leq j\leq k-1$, then $\alpha(\cH)=\Omega(\frac{n}{t}\log^{1/(k-1)}t)$.
\end{fact}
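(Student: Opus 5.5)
The statement is quoted from \cite{DLR95} rather than proved in the paper. The route I would take reduces it to the \emph{linear} case and then sparsifies by random sampling. The known input I would assume is the theorem of Duke, Lefmann and R\"odl (a strengthening of \cite{AKPSS}): a $k$-uniform hypergraph in which any two edges share at most one vertex, on $N$ vertices with average degree at most $T^{k-1}$, has
\[
\alpha\ \geq\ c_k\frac{N}{T}\log^{1/(k-1)}T .
\]
In the range $k\leq O((\log t/\log\log t)^{1/3})$ the constant $c_k$ can be taken absolute. The hypothesis $\fhj\leq nt^{2k-j-1-\eps}$ for $2\leq j\leq k-1$ says that pairs of edges overlapping in two or more vertices are rare. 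So the plan is to destroy them by a random sparsification and then apply the linear bound.

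\textbf{Step 1: sample.} Fix $p=t^{-1+\delta}$ with $\delta=\eps/(2k)$, and let $S$ contain each vertex independently with probability $p$. Then:
\begin{itemize}
\item $\E|S|=np$.
\item $\E|E(\cH[S])|\leq n t^{k-1}p^k/k = np(tp)^{k-1}/k$, so the average degree in $\cH[S]$ is $O(T^{k-1})$ with $T\coloneqq tp=t^{\delta}$.
\item The expected number of pairs of edges inside $S$ meeting in exactly $j\geq 2$ vertices is
\[
\fhj\, p^{2k-j}\ \leq\ np\,(tp)^{2k-j-1}t^{-\eps}\ \leq\ np\, t^{\delta(2k-3)-\eps}\ \leq\ np\, t^{-\eps/(2k)} ,
\]
since $\delta(2k-3)\leq\eps-\eps/k$.
\end{itemize}

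\textbf{Step 2: alter.} For every such overlapping pair inside $S$, delete one vertex of it. By Markov's inequality this removes $o(np)$ vertices, and with constant probability $|S|\geq np/2$. The result is a linear hypergraph on $N=\Theta(np)$ vertices with average degree $O(T^{k-1})$. If the linear theorem at hand needs a maximum-degree bound, I would also discard vertices whose degree exceeds a constant multiple of $T^{k-1}$. By Markov applied to the edge count, only a small constant fraction of vertices is lost.

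\textbf{Step 3: apply the linear bound and compute.} The linear theorem gives
\[
\alpha(\cH)\ \geq\ c_k\,\frac{np}{tp}\,\log^{1/(k-1)}\!\bigl(t^{\delta}\bigr)\ =\ c_k\,\frac{n}{t}\,\Bigl(\frac{\eps}{2k}\Bigr)^{1/(k-1)}\log^{1/(k-1)}t .
\]
The loss factor $(\eps/2k)^{1/(k-1)}$ is bounded below by an absolute constant precisely when $\eps\geq \exp(-O(k))$, which explains that hypothesis.

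\textbf{Main obstacle.} The hardest part is the linear-hypergraph theorem itself, which I am taking as a black box. It is proved by the AKPSS nibble/semi-random method, combined with a further reduction from linear to uncrowded hypergraphs (no $3$- or $4$-cycles). The delicate part there is controlling constants uniformly in $k$, which is where the upper limit $k\leq O((\log t/\log\log t)^{1/3})$ enters. If I had to prove that step too, I would follow \cite{AKPSS}: iteratively select random small bites of vertices, and track degree and codegree concentration over $\Theta(\log T)$ rounds.
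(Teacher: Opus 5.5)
The paper does not prove this Fact. It quotes \cite[Theorem~3]{DLR95} and only remarks that the constant degrades by a factor $(\eps/k)^{1/(k-1)}$. Your reduction is the standard route to such a statement and reproduces exactly that factor, so the strategy is right: sample at rate $p=t^{-1+\delta}$, delete a vertex from each surviving pair of edges sharing at least $2$ vertices, then apply the linear-hypergraph bound at degree parameter $T=t^{\delta}$.

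One bookkeeping improvement: since $t^{-\eps}=T^{-2k}$, the expected number of surviving pairs meeting in $j$ vertices is at most $np\,T^{-(j+1)}$. The total over $2\leq j\leq k-1$ is then at most $2np\,T^{-3}$ once $T\geq 2$. Summing your uniform bound $np\,t^{-\eps/(2k)}$ over $j$ instead costs an extra factor $k$.

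The real gap is that everything after Step~1 assumes $T=t^{\eps/(2k)}$ is large. You need this in two places:
\begin{itemize}
\item for the alteration to remove only a small fraction of $S$;
\item to invoke the linear theorem at all, including its admissible range, which must now hold with $T$ in place of $t$.
\end{itemize}
The hypotheses do not give this on the whole stated range. Take $\eps=e^{-Ck}$ and $k$ of order $(\log t/\log\log t)^{1/3}$. Then $\eps\log t/k\to 0$, so $T\to 1$. Your bound on the number of overlapping pairs no longer beats $|S|$, and the linear bound is vacuous. Noting that $(\eps/2k)^{1/(k-1)}=\Omega(1)$ does not address this.

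The fix is a case split.
\begin{itemize}
\item Suppose $\eps\log t/(2k)$ is below a suitable polynomial in $k$, namely enough to make $T\geq 2$ and to place $k$ in the admissible range for $T$. Then $\eps\geq e^{-O(k)}$ forces $\log\log t=O(k)$, hence $\log^{1/(k-1)}t=O(1)$. Here the elementary bound $\alpha(\cH)\geq(1-1/k)\,n/t$ already suffices: keep each vertex with probability $1/t$ and delete one vertex from each surviving edge.
\item Otherwise $\log T$ exceeds that polynomial in $k$, and your argument goes through.
\end{itemize}
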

\begin{remark}
    \cite{DLR95} states an $\alpha(\cH)=\Omega_{k,\eps}(\frac{n}{t}\log^{1/(k-1)}t)$ lower bound; the dependence on $\eps,k$ is a factor of $(\eps/k)^{1/(k-1)}$ which is $\Omega(1)$ for $\eps\geq \exp(-O(k))$.
\end{remark}

\section{Proofs of \Cref{thm:main} and \Cref{thm:genmult}}
We begin by proving the following bound on the number of minimal common supersequences of a given length.
\begin{lemma}\label{lem:mincount}
    Let $n, a, b$ be natural numbers with $n \geq a + b$ and let $r\geq 0$. If $u$ and $v$ are words with length $n - a$ and $n - b$ (respectively) and $\scs(u, v)= n$, then the number of minimal common supersequences of $u$ and $v$ of length $n+r$ is at most ${a+b+2r\choose a+r}$.
\end{lemma}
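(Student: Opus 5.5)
We aim for an injection from minimal common supersequences of length $n+r$ into binary words of length $a+b+2r$ with exactly $a+r$ ones (or a similar labeled pattern), which gives the bound ${a+b+2r\choose a+r}$. The $r=0$ case is \cite[Theorem~3]{ABGHK}, and the plan is to adapt its argument.

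First we fix the bookkeeping. Let $w$ be a common supersequence of $u,v$ of length $n+r$. Take the leftmost greedy embeddings of $u$ and of $v$ into $w$. Each position of $w$ is then labeled by one of three types: used by both $u$ and $v$, used only by $u$, or used only by $v$. Minimality of $w$ implies that every position is used by at least one of them, since otherwise deleting that position would leave a common supersequence. Write $c$ for the number of positions of the first type, $p$ for the second, and $q$ for the third. Then
\begin{align*}
c+p = n-a,\qquad c+q = n-b,\qquad c+p+q = n+r,
\end{align*}
so $p=b+r$, $q=a+r$, and $c=n-a-b-r$. The positions outside the shared part therefore number $p+q=a+b+2r$, and exactly $a+r$ of them are $v$-only. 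Call the sequence of labels on these non-shared positions the \emph{signature} of $w$; it is a word of length $a+b+2r$ with $a+r$ letters marked $v$-only.

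The main step is to show that $w$ is determined by $u$, $v$ and its signature. The idea is to reconstruct $w$ left to right with a greedy rule. Suppose we have already matched prefixes $u_{<i}$ and $v_{<j}$. If $u_i=v_j$, then canonicity of the greedy embedding together with minimality should force the next symbol of $w$ to be shared. This would mirror the $r=0$ argument of \cite{ABGHK}, where the next symbol is read off deterministically. If $u_i\neq v_j$, the next position is non-shared, and the signature says whether it is consumed by $u$ or by $v$, and hence what its symbol is.

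The obstacle is the case $u_i=v_j$ but the optimal choice is not to share, which can certainly happen once $r>0$. The remedy is to put the labeling choice itself into the signature: we record, for every step at which $w$ advances, which of $u$ or $v$ (or both) advances. But then shared steps would also need recording. To avoid this, we normalize each $w$ to a canonical alignment, namely the one that shares whenever the next letters agree, so that shared steps are forced and only the non-shared ones are recorded. We must check two things: that this canonical alignment exists for every minimal $w$, by exchanging adjacent $u$-only and $v$-only steps as in the greedy-embedding argument; and that under it the counts $p,q$ are still $b+r$ and $a+r$. Both follow from the length equations above together with the fact that every position of a minimal $w$ is covered. Distinct $w$ then give distinct signatures, which proves the bound ${a+b+2r\choose a+r}$. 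I expect verifying the existence of this canonical alignment to be the hardest part of the proof.
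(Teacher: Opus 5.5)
\textbf{Verdict: viable approach, but the key step is missing.} Your injection into labelled words is a sound plan, and the bookkeeping is right. Every position of a minimal $w$ is used by one of the two leftmost embeddings, so the length equations give $p=b+r$ and $q=a+r$, and signatures are words of length $a+b+2r$ with $a+r$ letters of one kind.

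The gap is the step that carries all the content: that $w$ can be recovered from $u$, $v$ and its signature. You first say minimality ``should'' force sharing when $u_i=v_j$. You then assert that non-sharing ``can certainly happen once $r>0$''. Finally you defer to a canonical alignment produced by an unspecified exchange argument, which you flag as the hardest, unverified part. So injectivity is never established.

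The obstacle you describe also does not exist for the embeddings you already fixed. The canonical alignment you are looking for is exactly the pair of leftmost embeddings, so no exchange argument is needed. Note too that swapping adjacent $u$-only and $v$-only steps of an alignment generally changes the word $w$ itself, so it is not a re-alignment of a fixed $w$.

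\textbf{The missing argument.} Let $m$ be a position of $w$, and let $u_i$, $v_j$ be the first letters not matched before position $m$ by the leftmost embeddings. Since $u_{i-1}$ sits before $m$ and $u_i$ is placed at the first occurrence after it, $u_i$ is matched at the first position $\geq m$ carrying the letter $u_i$. Likewise, $v_j$ is matched at the first position $\geq m$ carrying $v_j$.
\begin{itemize}
\item If $u_i=v_j$ and $w_m\neq u_i$, then position $m$ is used by neither embedding. Deleting it leaves a common supersequence, contradicting minimality. Hence $w_m=u_i=v_j$ and $m$ is shared.
\item If $u_i\neq v_j$, or one of the two words is already exhausted, the same coverage argument shows that $w_m$ equals the next letter of $u$ or of $v$. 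It is then used by exactly one embedding, and the next signature letter says which.
\end{itemize}
Thus $w$ is rebuilt deterministically from its signature, which gives the bound ${a+b+2r\choose a+r}$.

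\textbf{Comparison with the paper.} With this step filled in, your route is a genuine and somewhat cleaner alternative to the paper's proof. The paper uses the same first-letter dichotomy: after stripping a common prefix, a minimal $w$ begins with $u_1$ or $v_1$, and the remainder is again minimal. It then inducts on $a+b+2r$ via Pascal's identity. That forces it to track whether $\scs$ drops when a letter is removed and to add $r=-1$ base cases. Your unrolled injection avoids that bookkeeping and does not even use the hypothesis $\scs(u,v)=n$.
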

\begin{proof}
    We will prove the result for $r\geq -1$. If $a=0$ or $b=0$, then there is at most one minimal common supersequence, while when $a,b\geq 1$ and $r= -1$, there are no minimal common supersequences of length $n-1$. We induct on $a+b+2r$ with the aforementioned base cases. Assume without loss of generality that $u_1\neq v_1$; if $u$ and $v$ have a common prefix, then every minimal common supersequence of $u$ and $v$ must have the same prefix, so we can ignore the prefix. Then every minimal common supersequence is either of the form $u_1x$ or $v_1y$, where $x$ is a minimal common supersequence of $u_{[2,n-a]}$ and $v$, and $y$ is a minimal common superseqeunce of $u$ and $v_{[2,n-b]}$. Note that $x$ has length $n+r-1$ and $\scs(u_{[2,n-a]},v)$ is either $n-1$ or $n$; if it is equal to $n-1$, then $a$ and $r$ are unchanged but $b$ decreases by $1$, and if it is $n$, then $a$ increases by $1$, $b$ is unchanged, and $r$ decreases by $1$. Either way, there are at most ${a+b+2r-1\choose a+r}$ total choices for $x$ by induction. A similar analysis shows that there are at most ${a+b+2r-1\choose a+r-1}$ choices for $y$. Thus by Pascal's Identity, there are at most ${a+b+2r-1\choose a+r}+{a+b+2r-1\choose a+r-1}={a+b+2r\choose a+r}$ length $n+r$ minimal common supersequences.
\end{proof}
We use this to prove an upper bound on the number of common supersequences of a given length.
\begin{lemma}\label{lem:upperbound}
    Let $n,a,b$ be natural numbers such that $n\geq a+b.$ If $u\in \{0,1\}^{n-a}$ and $v\in \{0,1\}^{n-b}$ such that $\scs(u,v)=n$, then
    \begin{align*}
         \mscs{r}\leq \sum_{s=0}^r {a+b+2s\choose a+s}\sum_{i=0}^{r-s}{n+r\choose i}\mper
    \end{align*}

\end{lemma}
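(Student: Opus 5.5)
Every common supersequence $w$ of $u$ and $v$ of length $n+r$ contains at least one minimal common supersequence as a subsequence: delete letters from $w$ one at a time while it remains a common supersequence. If that minimal common supersequence has length $n+s$, then $0\leq s\leq r$, since $\scs(u,v)=n$. So I will bound $\mscs{r}$ by summing over $s$: the number of minimal common supersequences of length $n+s$, times the number of length-$(n+r)$ binary supersequences of a fixed length-$(n+s)$ string. The first factor is at most ${a+b+2s\choose a+s}$ by \Cref{lem:mincount}. The second factor needs to be at most $\sum_{i=0}^{r-s}{n+r\choose i}$, and this is where binarity enters.

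The second factor is the classical count of binary supersequences. For any binary string $z$ of length $m$ and any $\ell\geq m$, the number of binary supersequences of $z$ of length $\ell$ equals $\sum_{i=0}^{\ell-m}{\ell\choose i}$, independent of $z$; this is Levenshtein's / Chvátal--Sankoff's result. I will prove it via the canonical greedy embedding. Given a supersequence $w$ of length $\ell$, embed $z$ into $w$ greedily, matching each letter of $z$ at its leftmost possible position. Every unmatched position before the last matched one must then carry the complement of the next letter of $z$ still to be matched. After the last matched position the letters are arbitrary.

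So $w$ is determined by two pieces of data: the set of unmatched positions, and the free letters after position $p$, the position where $z_m$ is matched. Parametrizing by $p$ gives $\sum_{p} {p-1\choose m-1}2^{\ell-p}$. The identity $\sum_p{p-1\choose m-1}2^{\ell-p}=\sum_{i=0}^{\ell-m}{\ell\choose i}$ follows by a standard counting argument. Alternatively, I can get an upper bound directly: the binary string $w$ is determined by marking the $\ell-m$ unmatched positions and reading off the free letters. Here I only need an upper bound, so I can also cite the identity, since it is classical. With $m=n+s$ and $\ell=n+r$ this gives $\sum_{i=0}^{r-s}{n+r\choose i}$.

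Putting it together gives the union bound
\[
\mscs{r}\leq \sum_{s=0}^{r}\#\{\text{minimal c.s.\ of length } n+s\}\cdot\sum_{i=0}^{r-s}{n+r\choose i},
\]
which is exactly the claimed bound. I expect the main obstacle to be verifying the supersequence-count identity cleanly, in particular the greedy-embedding bijection. Everything else is a direct union bound combined with \Cref{lem:mincount}.
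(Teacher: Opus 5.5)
Your proposal is correct and is essentially the paper's proof. Both use a union bound over the length $n+s$ ($0\leq s\leq r$) of a minimal common supersequence contained in $w$, with \Cref{lem:mincount} bounding the number of such minimal supersequences and the classical binary supersequence count $\sum_{i=0}^{r-s}\binom{n+r}{i}$ bounding the number of length-$(n+r)$ extensions of each.

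Your greedy-embedding derivation of that count, which the paper uses without proof, is a sound addition. Rely on the exact identity $\sum_{p}\binom{p-1}{m-1}2^{\ell-p}=\sum_{i=0}^{\ell-m}\binom{\ell}{i}$ rather than the ``direct'' shortcut. Treating the marked unmatched positions and the free letters as independent choices only gives $\binom{\ell}{m}2^{\ell-m}$, which is too weak for the stated bound.
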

\begin{proof}
    Let $x$ be a length $\scs(u,v)+r=n+r$ common supersequence of $u$ and $v$. Then there exists $y\prec x$ such that $y$ is a minimal common supersequence of $u$ and $v$. Note that we can count the number of possible $x$ by first picking the length of some minimal common supersequence, which must be $n+s$ for some $0\leq s\leq r$. Then by \Cref{lem:mincount} there are up to ${a+b+2s\choose a+s}$ minimal common supersequences $y$ of that length, and finally, we generate a length $n+r$ common supersequence $x\succ y$, which can be done in $\sum_{i=0}^{r-s}{n+r\choose i}$ ways.
\end{proof}

Now we prove \Cref{thm:main}, which follows from bounds on the independence number of the $(k,t)$-deletion hypergraph $\gnkt$.

\begin{lemma}\label{lem:weakalphabound}
 For all $0<t\leq n^k$ and $k\geq 2$, we have $\alpha(\gnkt)\geq\Omega(\frac{2^nt}{n^{k+k/t}})$.    
\end{lemma}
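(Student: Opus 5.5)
This is the "Turán-type" bound for the hypergraph $\gnkt$, so we will use the standard random-deletion (alteration) argument for $(t+1)$-uniform hypergraphs, applied to the $k$-deletion balls. Rather than counting hyperedges directly, we work with the underlying structure. For each $y\in\{0,1\}^{n-k}$, let $S(y)\subseteq\{0,1\}^n$ be its set of length-$n$ supersequences. A set $\cC$ is independent in $\gnkt$ exactly when $|\cC\cap S(y)|\le t$ for every $y$. The classical count gives $|S(y)|=\sum_{i=0}^{k}\binom{n}{i}=:M=\Theta_k(n^k)$ for every $y$.

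\textbf{Random sampling.} Choose a random set $X\subseteq\{0,1\}^n$ by including each string independently with probability $p$, to be fixed below. Call $y$ \emph{bad} if $|X\cap S(y)|\ge t+1$. For each bad $y$, delete from $X$ one element of $X\cap S(y)$ together with enough further elements to leave at most $t$ of them; this removes at most $|X\cap S(y)|-t$ elements. The expected number of deletions is therefore at most
\[
\sum_{y} \E\big[(|X\cap S(y)|-t)^+\big]\le 2^{n-k}\,\E\big[\mathbf 1\{Z\ge t+1\}\,Z\big],\qquad Z\sim\mathrm{Bin}(M,p).
\]
Since $\E[\mathbf{1}\{Z\ge t+1\}Z]=Mp\Pr[\mathrm{Bin}(M-1,p)\ge t]\le Mp\binom{M}{t}p^t\le Mp\,(eMp/t)^t$, the expected number of deletions is at most $2^{n-k}Mp(eMp/t)^t$.

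\textbf{Choosing $p$.} The surviving set is a valid code, and its expected size is at least $2^np-2^{n-k}Mp(eMp/t)^t$. Take $p=\frac{t}{eM}\,\delta$ with $\delta=(2^{k-1}/M)^{1/t}$, so that $2^{-k}M\delta^t=1/2$. The expected surviving size is then at least $2^np/2=\Omega_k\big(2^n t\,M^{-1-1/t}\big)=\Omega_k\big(2^nt/n^{k+k/t}\big)$, using $M^{1/t}=\Theta_k(n^{k/t})$. This choice needs $p\le1$, which holds because $t\le n^k$ and $\delta\le1$ give $p\le t/(eM)=O_k(1)$; if the $k$-dependent constant pushes $p$ above $1$, we instead cap at $p=1$ with $t\approx M$, where the bound is trivial.

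\textbf{Main obstacle.} The delicate step is the deletion bookkeeping. A single deleted string lies in many $S(y)$, so we cannot count one deletion per bad $y$ blindly. The excess formulation above handles this: deleting strings only helps other balls, so the total deletions are bounded by the sum of excesses. The other point to verify is that the dependence on $k$ is absorbed into $\Omega$, which matches the statement's implicit $\Omega_k$ scaling since $k\ge2$ is fixed. The bound $\binom{M}{t}\le (eM/t)^t$ covers all $t\le n^k$ uniformly, so no case split in $t$ is needed.
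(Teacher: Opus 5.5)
Your argument is correct and is essentially the paper's: both sample each string independently with probability $p\asymp t/n^{k+k/t}$ and then alter. The only difference is bookkeeping. You delete the excess $(|X\cap S(y)|-t)^+$ in each ball, whereas the paper deletes one vertex from each hyperedge inside the sample and bounds their expected number by $2^{n-k}\binom{n^k}{t+1}p^{t+1}$ with $p=t/(en^{k+k/t})$; that is a slightly sharper per-ball estimate, which your choice of $\delta$ compensates for. Your treatment of $p\le 1$ can be stated crisply: since $\delta\le 1$, $p>1$ forces $t>eM>|S(y)|$, in which case $\Gamma_{n,k,t}$ has no hyperedges and $\alpha(\Gamma_{n,k,t})=2^n$.
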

\begin{proof}
    Let $p=t/en^{k+k/t}$ and let $\cC\subseteq \{0,1\}^n$ such that each $x\in \{0,1\}^n$ is included in $\cC$ independently with probability $p$. Then $\E|\cC|=2^nt/en^{k+k/t}$. Note that every $y\in \{0,1\}^{n-k}$ has at most $\sum_{i=0}^k{n\choose i}\leq n^k$ length-$n$ supersequences and thus $|E(\gnkt)|\leq 2^{n-k}{n^k\choose t+1}\leq 2^{n-k}n^{k(t+1)}/(t+1)!$. Given $\cC$, we can delete one $x\in C$ from $\cC$ from every $C\in E(\gnkt)$ where $C\subseteq \cC$ to create an independent set. The expected size of such an independent set is at least
    \begin{align}
        \frac{2^nt}{en^{k+k/t}}-\frac{2^{n-k}n^{k(t+1)}t^{t+1}}{\left(n^{k+k/t}\right)^{(t+1)}e^{t+1}(t+1)!}\geq \frac{2^nt}{en^{k+k/t}}\left(1-\frac{1}{2^{k+1/2}\sqrt{\pi t}}\right)\mper
    \end{align}
Thus we have $\alpha(\gnkt)\geq\Omega_k(\frac{2^nt}{n^{k+k/t}})$.
\end{proof}
\begin{lemma}\label{lem:alphabound}
 For all $2\leq t\leq O((\log n/\log \log n)^{1/3})$, we have $\alpha(\gnkt)\geq\Omega_k(\frac{2^nt}{n^{k+k/t}}\log^{1/t}n)$.
\end{lemma}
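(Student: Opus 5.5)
We apply \Cref{fact:indnum} to $\cH=\gnkt$. It is $(t+1)$-uniform on $N=2^n$ vertices, so in the notation of the Fact the uniformity is $t+1$. The quantity playing the role of ``$t$'' there will be
\[
D\coloneqq \bigl(n^{k(t+1)}/t!\bigr)^{1/t}\asymp_k n^{k+k/t}/t .
\]
If the hypotheses hold, the Fact gives
\[
\alpha(\gnkt)=\Omega\!\left(\frac{N}{D}\log^{1/t}D\right)=\Omega_k\!\left(\frac{2^n t}{n^{k+k/t}}\log^{1/t} n\right),
\]
since $\log D=\Theta_k(\log n)$. The range condition $t+1\leq O((\log D/\log\log D)^{1/3})$ is exactly the assumed range $t\leq O((\log n/\log\log n)^{1/3})$, after adjusting constants. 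So three things remain: the degree bound, the codegree bounds $F_{\gnkt}(j)$, and the check that $\eps$ is not too small.

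\textbf{Degree.} A vertex $x$ has at most $n^k$ subsequences $y$ of length $n-k$. Each such $y$ has at most $n^k$ length-$n$ supersequences, as in \Cref{lem:weakalphabound}. An edge through $x$ is therefore determined by a choice of $y$ and $t$ further supersequences of $y$, so
\[
\deg(x)\leq n^k\binom{n^k}{t}\leq n^{k(t+1)}/t!= D^t .
\]

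\textbf{Codegree counts.} Fix $2\leq j\leq t$. Any pair $C,C'$ of edges with $|C\cap C'|=j$ is obtained from the following data:
\begin{itemize}
\item two distinct shared vertices $x_1,x_2\in C\cap C'$;
\item witnesses $y,y'$ of length $n-k$, each a common subsequence of $x_1$ and $x_2$, with $y$ witnessing $C$ and $y'$ witnessing $C'$;
\item the other $j-2$ shared vertices, chosen as supersequences of $y$;
\item the remaining $t+1-j$ vertices of $C$, chosen as supersequences of $y$, and the remaining $t+1-j$ vertices of $C'$, chosen as supersequences of $y'$.
\end{itemize}
To control the pair $(x_1,x_2)$, stratify by $d\coloneqq n-\lcs(x_1,x_2)$, where $1\leq d\leq k$.
\begin{itemize}
\item The number of $x_2$ with $\lcs(x_1,x_2)=n-d$ is at most $O_k(n^{2d})$: delete $d$ bits from $x_1$, then insert $d$ bits.
\item Given such $x_1,x_2$, the common subsequences of length $n-k=\lcs(x_1,x_2)-(k-d)$ number $m_{\lcs(x_1,x_2),k-d}$. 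Apply \Cref{thm:genmult} with $a=b=d$, the length $n+d$ in place of $n$, and $r=k-d$. This gives $m_{\lcs(x_1,x_2),k-d}=O_k(n^{k-d})$. Hence there are $O_k(n^{2(k-d)})$ choices of the pair $(y,y')$.
\end{itemize}
Multiplying all the choices gives
\[
F_{\gnkt}(j)\lesssim_k N\sum_{d=1}^{k} n^{2d}\,n^{2(k-d)}\,n^{k(j-2)}\,n^{2k(t+1-j)}\lesssim_k N\,n^{k(2t+2-j)},
\]
up to factorial factors in $t$.

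\textbf{Comparison with the target.} The target is $N D^{2(t+1)-j-1-\eps}=N D^{2t+1-j-\eps}$. Its main factor is
\[
n^{(k+k/t)(2t+1-j)}=n^{k(2t+2-j)}\cdot n^{k(t+1-j)/t}\geq n^{k(2t+2-j)}\cdot n^{k/t}.
\]
Take $\eps=1/(2(t+1))$, so that $D^{\eps}\approx n^{k\eps(1+1/t)}$ is at most $n^{k/(2t)}$. This leaves a spare factor of at least $n^{k/(2t)}$. That factor absorbs the $O_k(1)$ constants and the factorial factors in $t$: the latter are at most $t^{O(t)}=n^{o(1)}$ because $t$ is at most polylogarithmic in $\log n$. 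Finally, $\eps=1/(2(t+1))\geq\exp(-O(t))$, as the Fact requires.

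\textbf{Main obstacle.} The delicate step is the codegree count. Pairs of strings at small distance share many common subsequences, so their contribution has to be balanced against how few such pairs exist. The plan handles this with the $n^{2d}\cdot n^{2(k-d)}$ tradeoff, using the polynomial bound $O_k(n^{k-d})$ extracted from \Cref{thm:genmult}. I would also double-check that ordering and overcounting issues (which two shared vertices serve as $x_1,x_2$, and whether $y=y'$) cost only factors depending on $t$ and $k$. Those factors are harmless given the spare $n^{k/(2t)}$.
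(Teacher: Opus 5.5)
Your proof is correct, but your codegree count follows a genuinely different, essentially dual, route from the paper's.

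\textbf{The paper's route.} It anchors on the witnesses. It fixes $y$, stratifies by $\ell=\scs(y,y')\in[n-k,n]$, and counts the pairs $(y,y')$ as $2^{n-k}O_k(n^{2(\ell-n+k)})$. It then bounds \emph{all} $j$ shared codewords as common length-$n$ supersequences of $y,y'$ via \Cref{lem:upperbound}, at $O_k(n^{n-\ell})$ each. The resulting exponent $2k+(j-2)(n-\ell)$ is largest at $\ell=n-k$ (i.e.\ $y=y'$), giving $2^{n-k}O_k(n^{jk})$.

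\textbf{Your route.} You anchor instead on two shared codewords, stratify by $d=n-\lcs(x_1,x_2)$, and bound the witness pairs using the subsequence half of \Cref{thm:genmult}. The tradeoff $n^{2d}\cdot n^{2(k-d)}$ is flat in $d$, and you count the remaining $j-2$ shared codewords crudely. You reach the same $N\cdot O_k(n^{k(2t+2-j)})$.

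\textbf{What each buys.} The paper's route has fewer dependencies: it needs only the supersequence count of \Cref{lem:upperbound}. Yours also rests on the injection $m_{\lcs(u,v),r}\le m_{\scs(u,v),r}$ from the paper's final lemma. In exchange, yours has the natural reading that nearby codewords share few witnesses.

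\textbf{Bookkeeping.} The paper's $\eps=1/(t+1)$ leaves no spare power of $n$ at $j=t$. It therefore has to track the factorials and constants exactly through $T=An^{k+k/t}/t$ with $A$ large. Your $\eps=1/(2(t+1))$ leaves a spare factor $n^{k/(2t)}$, which is cleaner.

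\textbf{One small correction.} Saying the factorial losses are $t^{O(t)}=n^{o(1)}$ is not enough by itself, since $n^{k/(2t)}$ is also $n^{o(1)}$. What you need is $t^{O(t)}\le n^{k/(2t)}$, i.e.\ $t^2\log t=o(\log n)$. This does hold for $t\le O((\log n/\log\log n)^{1/3})$. Note also that such $t$ is polynomial, not polylogarithmic, in $\log n$.
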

\begin{proof}
    Let $N=2^n$ and $T=An^{k+k/t}/t$, where $A=O_k(1)$ will be chosen later. Pick an arbitrary $x\in V(\gnkt)$. Then for every $C\in E(\gnkt)$ such that $x\in C$, there exists some $y\in \{0,1\}^{n-k}$ such that $y\prec x'$ for all $x'\in C$. Then given $x$, there are ${n\choose k}\leq n^k$ ways to choose such a $y\prec x$. There are at most $2^k{n\choose k}\leq 2n^k$ total $x'\in \{0,1\}^n$ with $x'\succ y$, so we upper bound the number of ways to choose the remaining elements of some $C\ni x$ by ${2n^k\choose t}\leq 2^tn^{kt}/t!$. Thus $x$ has degree at most $n^k2^tn^{kt}/t!\leq (2en^{k+k/t}/t)^t\leq T^t$, where we ensure that we choose $A\geq 2e$.

    We now upper bound $\fgj$, after which the result follows from \Cref{fact:indnum}. For any $2\leq j\leq t$, we upper bound the number of ways to count pairs $C,C'\in \gnkt$ such that $|C\cap C'|=j$. To do so, we first pick $y,y'$ such that $y\prec x$ and $y'\prec x'$ for all $x\in C,x'\in C'$, counting only $y,y'$ pairs such that there exists $x\in \{0,1\}^n$ with $y,y'\prec x$. Then we choose $C\cap C'$ and finally picking $C\setminus C'$ and $C'\setminus C$.

    Start by picking $y\in \{0,1\}^{n-k}$, which can be done in $2^{n-k}$ ways. Note that if we have $y', y\in \{0,1\}^{n-k}$ such that there exists some $x\in \{0,1\}^{n}$ satisfying $y,y'\prec x$, then $n-k\leq\scs(y,y')\leq n$. We pick $y'$ by first picking $n-k\leq\ell\leq n$, then picking a length-$\ell$ supersequence, which can be done in at most $2^{\ell-(n-k)}{\ell\choose \ell-(n-k)}=O_k(n^{\ell-(n-k)})$ ways, and finally some $y'$ satisfying $\scs(y,y')=\ell$ as a length-$(n-k)$ subsequence of the aforementioned supersequence. This can be done in up to ${\ell\choose \ell-(n-k)}=O_k(n^{\ell-(n-k)})$ ways. Then by \Cref{lem:upperbound}, the total number of length-$n$ common supersequences of $y,y'$ is at most 
    \begin{align*}
        m_{\scs(y,y'),n-\ell}\leq\sum_{s=0}^{n-\ell}{2k+2s\choose k+s}\sum_{i=0}^{n-\ell-s}{n\choose i}\leq O_k(n^{n-\ell})\mcom
    \end{align*}
    so there are at most ${O_k(n^{n-\ell})\choose j}\leq \frac{B^j}{j!}\cdot O_k(n^{j(n-\ell)})$ ways to choose $C\cap C'=\{x_1, \cdots, x_j\}$ such that $y,y'\prec x_1, \cdots, x_j$, where $B=O_k(1)$. Therefore, the total number of ways to choose $y,y',x_1, \cdots, x_j$ is
    \begin{align*}
        \frac{B^j}{j!}2^{n-k}\sum_{\ell=n-k}^{n}O_k(n^{j(n-\ell)+2(\ell-(n-k))})
        =\frac{B^j}{j!}2^{n-k}\sum_{\ell=n-k}^nO_k(n^{jk-(j-2)(\ell+k-n)})=\frac{B^j}{j!}2^{n-k}O_k(n^{jk})\mper
    \end{align*}    
    Finally, we upper bound the number of ways to pick $C\setminus C'$ and $C'\setminus C$ by the number of ways to choose $t+1-j$ distinct $x,x'\in \{0,1\}^n$ with $y\prec x$ and $y'\prec x'$, respectively. There are at most $\sum_{i=0}^k{n\choose i}=O_k(n^k)$ choices for each, so there are at most ${O_k(n^k)\choose t+1-j}^2\leq \frac{D^{2(t+1-j)}}{((t+1-j)!)^2}O_k(n^{2k(t+1-j)})$ total ways to choose the remaining elements of $C$ and $C'$, where $D=O_k(1)$. Therefore, letting $\eps = \frac{1}{t+1}$ and picking $A=9e^2(B+D+O_k(1))^2=O_k(1)$, we have
    \begin{align*}
        \fgj
        &\leq 2^{n-k}\frac{B^jD^{2(t+1-j)}}{j!((t+1-j)!)^2}O_k(n^{jk+2k(t+1-j)})
        \\&\leq \frac{(B+D+O_k(1))^{2t+2-j}}{j!((t+1-j)!)^2}\cdot 2^nn^{(k+k/t)(2t-jt/(t+1))}
        \\&\leq \frac{(3eB+3eD+O_k(1))^{2t+2-j}}{t^{2t+2-j}} 2^nn^{(k+k/t)(2t+1-j-1/(t+1))}
        \\&\leq N\left(\frac{A n^{(k+k/t)}}{t}\right)^{2t+1-j-1/(t+1)}
        \leq NT^{2t+1-j-\eps}\mper
    \end{align*}
\end{proof}
\begin{proof}[Proof of \Cref{thm:main}]
The $t=1$ case is known from \cite{ABGHK}. For the $k=1$ case, the \cite{VarshamovTenengolts65} construction gives $n+1$ mutually disjoint $1$-deletion codes of size $\Theta(2^n/n)$, and the union of any $t$ of these is a $t$-list decodable $1$-deletion code of size $\Theta(2^nt/n)\geq \Omega(2^nt\log^{1/t}n/n^{1+1/t})$. When $k,t\geq 2$, since every independent set in $\gnkt$ is a $t$-list decodable $k$-deletion code, so if $t\leq O(\log \log n)$, this directly follows from \Cref{lem:alphabound}. When $t=\Omega(\log \log n)$, we note that $\log^{1/t}n=O(1)$, so the result follows from \Cref{lem:weakalphabound}.
\end{proof}
Finally, we prove the following lemma, which combined with \Cref{lem:upperbound} proves \Cref{thm:genmult}.
\begin{lemma}
    Let $n,a,b$ be natural numbers such that $n\geq a+b.$ If $u$ and $v$ are words of length $n-a$ and $n-b$, respectively, and $\lcs(u,v)=n-a-b$, then for any $r\geq 0$, $\mlcs{r}\leq\mscs{r}$. 
\end{lemma}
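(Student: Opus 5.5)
The plan is to drop the hypothesis $\lcs(u,v)=n-a-b$, which only says $\scs(u,v)=|u|+|v|-\lcs(u,v)=n$, and prove a more general statement by induction. Write $f(u,v,\ell)$ for the number of common subsequences of length $\ell$ and $g(u,v,m)$ for the number of common supersequences of length $m$. The claim to prove is
\begin{align*}
f(u,v,\ell)\leq g(u,v,|u|+|v|-\ell)\qquad\text{for all words } u,v \text{ and all } \ell\geq 0\mper
\end{align*}
Taking $\ell=n-a-b-r$ gives $|u|+|v|-\ell=n+r=\scs(u,v)+r$, which is exactly $\mlcs{r}\leq\mscs{r}$. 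This generalizes the injection-style argument behind \cite[Theorem~3]{ABGHK}. I will induct on $|u|+|v|+\ell$. The base cases are easy: if $\ell=0$ the left side is $1$ and a supersequence of length $|u|+|v|$ (for instance $uv$) exists; if one word is empty, the count is that of subsequences of a single word, and the analogous supersequence count is at least as large.

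The key step is a first-letter recursion for both counts. For a letter $c$, let $u^{(c)}$ be the suffix of $u$ strictly after the first occurrence of $c$. Reading a common subsequence greedily from the left gives
\begin{align*}
f(u,v,\ell)=\sum_c f\bigl(u^{(c)},v^{(c)},\ell-1\bigr)\mper
\end{align*}
Let $u_c$ denote $u$ with its first letter removed if that letter is $c$, and $u$ itself otherwise. Then $g(u,v,m)=\sum_c g(u_c,v_c,m-1)$, since $w=cw'$ is a common supersequence of $u,v$ exactly when $w'$ is one of $u_c,v_c$. It therefore suffices to compare the summands letter by letter. By induction, $f(u^{(c)},v^{(c)},\ell-1)\leq g(u^{(c)},v^{(c)},M')$ with $M'=|u^{(c)}|+|v^{(c)}|-\ell+1$.

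Now some bookkeeping. $u^{(c)}$ is a suffix of $u_c$; let $\delta$ be the total number of letters by which $u^{(c)},v^{(c)}$ are shorter than $u_c,v_c$, and let $e\in\{0,1,2\}$ be the number of the two words that begin with $c$. If $u$ does not begin with $c$, then $u_c=u$ and $u^{(c)}$ loses at least one letter, so in all cases $e+\delta\geq 2$. A direct computation then gives $M'=(m-1)-(e+\delta-2)$. Setting $d=e+\delta-2$, we have $0\leq d\leq\delta$, and the pair $(u^{(c)},v^{(c)})$ is obtained from $(u_c,v_c)$ by deleting a prefix of total length $\delta$. So everything reduces to a monotonicity lemma: if $(u',v')$ arises from $(u,v)$ by deleting prefixes of total length $\delta$ and $0\leq d\leq\delta$, then $g(u',v',M-d)\leq g(u,v,M)$. (If $u^{(c)}$ is undefined because $c$ does not occur, the $f$-summand is zero and there is nothing to prove.)

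I expect this monotonicity lemma to be the main obstacle. The naive direction is wrong, since deleting letters from $u$ makes supersequences easier, so $g(u,v,M)\leq g(u',v,M)$. The length drop by $d$ has to be what compensates. I would first reduce to deleting a single letter, i.e.\ show $g(u',v,M-1)\leq g(u,v,M)$ when $u=xu'$ with $x$ one letter, and also $g(u',v,M)\leq g(u,v,M)$ in the special case where the deleted letters allow it (the $d<\delta$ slack). I would then prove these by the same first-letter recursion for $g$, inducting on $|u|+|v|+M$. The natural attempt is the map $w\mapsto xw$, which sends a common supersequence of $(u',v)$ of length $M-1$ to a common supersequence of $(xu',v)$ of length $M$; it is injective, which settles the single-letter case with $d=1$. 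The delicate cases are those where extra prefix letters are deleted without a matching length drop, i.e.\ the non-$c$ letters skipped before the first $c$. There I would exploit that in $u^{(c)}$ the skipped block consists of letters $\neq c$, and try to absorb them into a supersequence starting with $c$ via an explicit injection, falling back on a direct induction on the recursion if that fails.
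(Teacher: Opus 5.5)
\textbf{There is a genuine gap.} Your generalization to $f(u,v,\ell)\leq g(u,v,|u|+|v|-\ell)$ for all $\ell$ is fine; the paper's map never uses the hypothesis $\lcs(u,v)=n-a-b$ either. The induction breaks at the step ``it therefore suffices to compare the summands letter by letter.'' That comparison is false, already for binary words.

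Write $\epsilon$ for the empty word and take $u=1$, $v=01$, $\ell=1$. Then $f(u,v,1)=1$ and $g(u,v,2)=1$ (only $01$), so the lemma holds here.
\begin{itemize}
\item For $c=1$, the $f$-summand is $f(\epsilon,\epsilon,0)=1$.
\item The $g$-summand for $c=1$ is $g(\epsilon,01,1)=0$, since $u_c=\epsilon$ and $v_c=01$.
\item The entire supersequence count sits in the $c=0$ summand $g(1,1,1)=1$.
\end{itemize}

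Equivalently, your monotonicity lemma is false in exactly the instance you need. The pair $(\epsilon,\epsilon)$ arises from $(\epsilon,01)$ by deleting prefixes of total length $\delta=2$, with $d=e+\delta-2=1$. Yet $g(\epsilon,\epsilon,0)=1>0=g(\epsilon,01,1)$.

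The cases you call ``delicate'' are not exceptional either; they are all the nontrivial cases:
\begin{itemize}
\item if exactly one of $u,v$ starts with $c$, then $d=\delta-1$;
\item if neither does, then $d=\delta-2$;
\item $d=\delta$, where the prepending injection $w\mapsto xw$ works, occurs only in the trivial case $\delta=0$.
\end{itemize}
Over larger alphabets even the ``neither'' case fails. With $u=ac$, $v=bc$, $\ell=1$, the $c$-summands are $1$ versus $g(ac,bc,2)=0$.

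\textbf{The missing idea.} An injection must be allowed to send a common subsequence beginning with $c$ to a supersequence beginning with a \emph{different} letter. So no letter-by-letter matching of the two first-letter recursions can succeed.

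The paper does this with an explicit map $\phi$:
\begin{itemize}
\item take the leftmost embeddings $i_1<\cdots<i_L$ of $w$ in $u$ and $j_1<\cdots<j_L$ in $v$;
\item for each $k$, output $u_{[i_{k-1}+1,i_k-1]}$, then $v_{[j_{k-1}+1,j_k-1]}$, then $w_k$;
\item finish with the two tails.
\end{itemize}
This yields a common supersequence of length $|u|+|v|-|w|$. Its first letter is a leading letter of $u$ or $v$ unless both embeddings start at position $1$. That is exactly the cross-letter charging your recursion lacks: in the example, $\phi(1)=01$ begins with $0$. Injectivity comes from the fact that the gaps of a leftmost embedding contain no occurrence of $w_k$.

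To keep an inductive proof, you would need a recursion in which the $f$-terms for letter $c$ are charged to $g$-terms for the first letter of $u$ or of $v$. That essentially amounts to rebuilding this map.
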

\begin{proof}
    We show the existence of an injective mapping $\phi$ from the set of common subsequences of length $\lcs(u,v)-r$ of $u$ and $v$ to the set of common supersequences of length $\scs(u,v)+r$. For any common subsequence $w$ of length $\lcs(u,v)-r$, consider the left-most copy of $w$ in $u$ and $v$, and let $i_1< \cdots< i_{\lcs(u,v)-r}$ be the corresponding coordinates of $u$ and $j_1< \cdots< j_{\lcs(u,v)-r}$ be the corresponding coordinates of $v$. Letting $i_0=j_0=0$, for all $k=1, \cdots, \lcs(u,v)-r$, we print all $u_{[i_{k-1}+1,i_k-1]}$, followed by $v_{[j_{k-1}+1,j_k-1]}$, followed by $w_k$. Finally, we end by printing $u_{[i_{\lcs(u,v)-r}+1,n-a]}$ followed by $v_{[j_{\lcs(u,v)-r}+1,n-b]}$. We let $\phi(w)$ be the resulting length-$(n+r)$ supersequence.  

    Let $x=\phi(w)$ and $x'=\phi(w')$ for some length-$(n-a-b-r)$ common subsequences $w\neq w'$. We will show that $x\neq x'$. Let $i_1<\cdots<i_{\lcs(u,v)-r}$ and $j_1<\cdots<j_{\lcs(u,v)-r}$ be the coordinates of the leftmost copy of $w$ in $u$ and $v$, and $i'_1<\cdots<i'_{\lcs(u,v)-r}$ and $j'_1<\cdots<j'_{\lcs(u,v)-r}$ be the coordinates of the leftmost copy of $w'$ in $u$ and $v$. Let $k$ be the smallest coordinate such that $w_k\neq w'_k$. Then $i_\ell=i'_\ell$ and $j_\ell=j'_\ell$ for all $\ell < k$, and $x$ and $x'$ begin with the same first $i_{k-1}+j_{k-1}-(k-1)$ coordinates. 
    
    If $i_k+j_k=i'_k+j'_k$, then $x_{i_k+j_k-k}=w_k\neq w'_k=x'_{i'_k+j'_k-k}$. Otherwise, we assume without loss of generality that $i_k+j_k<i'_k+j'_k$. Suppose that $x=x'$. If $i_k\geq i'_k$, then $x_{i'_{k}+j'_{k-1}-(k-1)}=u_{i'_k}=w'_k$. Then if $x'_{i'_k+j'_{k}-(k-1)}=w'_k$, this implies that either there exists some $i'_{k-1}+1\leq i\leq i'_k-1$ such that $u_i=w'_k$, or some $j'_{k-1}+1\leq j\leq j'_k-1$ such that $v_j=w'_k$, both of which contradict that $u_{i'_k},v_{j'_k}$ are the coordinates of $w'_k$ in the leftmost copy of $w'$. Thus we have $i_k<i'_k$. Similarly, we have $j_k<j'_k$. 

    Now let $p, q\geq 0$ such that $i_{k+p}\leq i'_k< i_{k+p+1}$ and $j_{k+q}< j'_k\leq j_{k+q+1}$, where we define $i_{\lcs(u,v)-r+1}=n-a+1$ and $j_{\lcs(u,v)-r+1}=n-b$. If $p\leq q$, then $x_{i'_k+j_{k+p}-(k+p)}=u_{i'_k}=w'_k$, and $i'_k+j_{k+p}-(k+p) < i'_k+j'_k-k$. On the other hand, if $p>q$, then $x_{i_{k+q+1}+j'_k-(k+q+1)}=v_{j'_k}=w'_k$, and $i_{k+q+1}+j'_k-(k+q+1)<i'_k+j'_k-k$. Either way, there exists some $i_{k-1}+j_{k-1}-(k-1) < t < i'_k+j'_k-k$ such that $x_t=w'_k$. Now suppose that $x'=x$, so $x'_t=w'_k$. But then as before there exists some $i'_{k-1}+1\leq i\leq i'_k-1$ such that $u_i=w'_k$, or some $j'_{k-1}+1\leq j\leq j'_k-1$ such that $v_j=w'_k$, again contradicting that $u_{i'_k},v_{j'_k}$ are the coordinates of $w_k'$ in the leftmost $w'$ in $u$ and $v$. 
    
    Thus we have $x'\neq x$ for all $w\neq w'$, so $\phi$ is an injective map from length $\lcs(u,v)-r$ subsequences to length $\scs(u,v)+r$ supersequences, so $\mlcs{r}\leq\mscs{r}$. 
    
\end{proof}
\section*{Acknowledgments}
We thank Noga Alon, Hsin-Po Wang, and Wei-Hsuan Yu for useful discussions. GPT-5.5 Plus and GPT-5.6 Pro were used for literature review and proofreading. 
\bibliographystyle{alpha}
\bibliography{bib}
\end{document}